\documentclass{huber_article}
\usepackage[utf8]{inputenc}

\usepackage{amsmath,amsfonts,amsthm}
\usepackage{dsfont}
\usepackage{graphicx}
\usepackage{color}
\usepackage[margin=1.5in]{geometry}
\usepackage{tikz}

\newcommand{\prob}{\mathbb{P}}
\newcommand{\mean}{\mathbb{E}}
\newcommand{\var}{\mathbb{V}}
\newcommand{\real}{\mathbb{R}}
\newcommand{\ind}{\mathds{1}}
\newcommand{\median}{\operatorname{med}}
\newcommand{\unifdist}{\textsf{Unif}}
\newcommand{\betadist}{\textsf{Beta}}
\newcommand{\normdist}{\textsf{N}}

\newtheorem{lemma}{Lemma}
\newtheorem{theorem}{Theorem}
\theoremstyle{definition}
\newtheorem{definition}{Definition}

\begin{document}

\title{An optimal 
$(\epsilon,\delta)$-approximation scheme
for the mean of random variables with
bounded relative variance}

\author{Mark Huber}

\date{}

\maketitle

\begin{abstract}
Randomized approximation algorithms for many 
\#P-complete problems (such as the partition
function of a Gibbs distribution, the volume of 
a convex body, the permanent of a $\{0,1\}$-matrix, and many
others)
reduce to creating random variables 
$X_1,X_2,\ldots$ with finite mean $\mu$ and standard deviation
$\sigma$ such that $\mu$ is 
the solution for the problem input, and the relative standard deviation $|\sigma/\mu| \leq c$ for known $c$.  Under these circumstances, 
it is known that the
number of samples from the $\{X_i\}$ needed to 
form an $(\epsilon,\delta)$-approximation $\hat \mu$ that satisfies
$\prob(|\hat \mu - \mu| > \epsilon \mu) \leq \delta$
is at least $(2-o(1))\epsilon^{-2} c^2\ln(1/\delta)$. 
We present here an easy to implement $(\epsilon,\delta)$-approximation 
$\hat \mu$ that uses $(2+o(1))c^2\epsilon^{-2}\ln(1/\delta)$ samples.
This achieves the same optimal running time as other estimators, but 
without the need for extra conditions such as bounds on third or
fourth moments.
\end{abstract}

\section{Introduction}

Suppose $X_1,X_2,\ldots \sim X$ are iid with mean $\mean[X] = \mu$ and variance
$\var(X) = \sigma^2$.  The relative standard deviation is $\sigma/|\mu|$ and the relative variance is $\sigma^2/\mu^2$.  Say the relative standard deviation
is bounded by $c$ if 
\begin{equation}
\label{EQN:bounded}
\left|\frac{\sigma}{\mu}\right| \leq c.
\end{equation}

Suppose $\mu$ and $\sigma$ are unknown, but $c$ is known.  Then the goal 
is to use as few $X_i$ as possible to find an estimate $\hat \mu$
for $\mu$ that is an $(\epsilon,\delta)$-randomized approximation, that is
\[
\prob(|\hat \mu - \mu| > \epsilon \mu) \leq \delta.
\]

Suppose that an $(\epsilon,\delta)$-randomized approximation requires
\[
(S + o(1)) c^2 \epsilon^{-2} \ln(1/\delta)
\]
samples for a constant $S$ and where the little-o notation refers to 
$\epsilon \rightarrow 0$.  
Then call $S$ the {\em scale factor} of the algorithm.

In this work we present a simple algorithm that is both easy to implement and
which achieves the optimal scale factor $S = 2$ 
without any additional assumptions about the random variables 
such as bounded higher moments.

This basic problem arises often in randomized 
algorithms.  For instance, problems for 
approximating the partition function of the Ising
model~\cite{jerrums1993}, the permanent of a 
$\{0,1\}$-matrix~\cite{jerrumsv2004}, the volume
of a convex body~\cite{dyerfk1991,kannanls1997}, the number of 
solutions to a DNF logical expression~\cite{jerrumvv1986}, the number of
linear extensions of a poset~\cite{matthews1991,huber2006b}, and many
more all have this problem as a subproblem.  Any
improvement in the ability to deal with this basic
problem
directly translates into better approximation
algorithms for all of these problems.

This problem has a long history, stretching back to 
Nemirovsky and Yudin~\cite{nemirovskyy1983} who used the median-of-means
estimator in the context of stochastic optimization.  Jerrum, Valiant,
and Vazirani~\cite{jerrumvv1986} developed a similar estimator for
the purposes of creating randomized approximation schemes for \#P complete
problems.  By 1999~\cite{alonms1999}, this method was in wide use for
online algorithms.  Hsu and Sabato~\cite{hsus2014,hsus2016} 
analyzed the 
basic median-of-means estimator and proved that it had a scale factor of  
121.5 for small enough $\epsilon$.

Catoni~\cite{catoni2012} greatly advanced the area by presenting an
approximation that used an $M$-estimator.  This was not an 
$(\epsilon,\delta)$-randomized approximation algorithm, rather it gave
a confidence bound based on the samples and specific values of the
parameters used in the estimate.  While it
could bound the confidence interval for the estimate
based on the parameters and the unknown $\mu$ and 
$\sigma^2$ for $X_i$, there seems to be no way of setting
the parameters ahead of time for given $\epsilon$ and $\delta$
without some additional
information.  Certainly no such method was given in~\cite{catoni2012}.

Devroye et. al.~\cite{devroyello2016} showed that if the kurtosis of
the random variables is bounded above, then the optimal scale factor
$S = 2$ could be attained with a simpler estimator.  
Unfortunately, in order to run their algorithm,
the user needed this upper bound on the kurtosis.  Bounding 
the kurtosis can be much more challenging mathematically than bounding
the variance.
Minsker and Strawn~\cite{minskers2017} 
returned to the original median-of-means estimator.  When the 
random variables have bounded third moment, the Berry-Esseen Theorem
can be used to show quick convergence to normality, and they showed
that this gave the simple median-of-means algorithm
a scale factor of 4.5.  As with
the Devroye et. al. method, this requires that the bound on the 
third moment be given explicitly before the algorithm can be used.

The approach here takes the Catoni $M$-estimator in a new direction.
There is no unique approach to getting the extra information
required to turn the Catoni 
$M$-estimator into an $(\epsilon,\delta)$-approximation.  One
approach is to use a two-step process that works as follows.  
Before running the $M$-estimator, first generate
a weaker estimate $\hat \mu_1$ that is an
$(\sqrt{\epsilon},\delta/2)$ randomized approximation to $\mu$.  
Then use this estimate
to set the parameters of the Catoni $M$-estimator
to give an output that is provably an
$(\epsilon,\delta)$-approximation.

While this two-step process works, it (like all $M$-estimators)
requires finding the root of a nonlinear equation.  Analysis of the
number of steps needed to get an close approximation to the root
was not done in~\cite{catoni2012}, and would need to be accomplished before the
running time of the method is known.

Previous algorithms either had too large
a scale factor, required rootfinding, or required knowledge of 
higher moments. The new method presented here solves all these difficulties.
\begin{itemize}
  \item It achieves the optimal scale factor $S = 2$.
  \item No rootfinding step is required.  Instead, first a function is
        randomly chosen by some initial samples, and then the final estimator
        is a sample average this random function applied to new data.
  \item No bound on higher moments is necessary.  In fact, even if the
        second moment is the highest moment that exists for the random
        variables, the new method is still an $(\epsilon,\delta)$-randomized
        approximation.
\end{itemize}

Our main result concerning this new method is as follows.
\begin{theorem}
  \label{THM:main}
  Let $X_1,X_2,\ldots \sim X$ with $\mean[X] = \mu$ and $\var[X] = \sigma^2$
  satisfying $\sigma^2/\mu^2 \leq c^2.$  For $\epsilon < 1$, there exists
  an $(\epsilon,\delta)$-randomized approximation algorithm that uses $n$
  samples where 
  \[
    n = \left\lceil\frac{2 c^2\epsilon^{-2}\ln(4/\delta)}
    {1 - \epsilon}\right\rceil + 
    \left\lceil 8\epsilon^{-1} \left(1+c^2\right) \right\rceil \cdot \left[2\left\lceil  \ln\left(\frac{7}{48\sqrt{\pi}\delta}\right)/\ln\left(\frac{16}{7}\right)\right\rceil 
    + 1\right].
  \]
\end{theorem}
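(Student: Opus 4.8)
The plan is a two-stage estimator. In the first stage run the classical median-of-means estimator on an initial batch of $m:=\lceil 8\epsilon^{-1}(1+c^2)\rceil$ samples in each of $2k+1$ blocks, with $k:=\lceil \ln(7/(48\sqrt{\pi}\delta))/\ln(16/7)\rceil$, and let $\hat\mu_1$ be the median of the $2k+1$ block averages. In the second stage draw $n_1:=\lceil 2c^2\epsilon^{-2}\ln(4/\delta)/(1-\epsilon)\rceil$ fresh samples $X_1,\dots,X_{n_1}$ and output
\[
\hat\mu=\frac1{n_1}\sum_{i=1}^{n_1}g(X_i),\qquad g(x)=\hat\mu_1+\frac1{\hat\alpha}\,\psi\!\big(\hat\alpha(x-\hat\mu_1)\big),
\]
where $\psi$ is the Catoni influence function (odd, with $\exp(\psi(u))\le 1+u+u^2/2$ for every $u$) and $\hat\alpha>0$ is a scale computed from $\hat\mu_1$, $c$, $\epsilon$. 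The key feature is that once the first-stage data is seen, $g$ is a fixed function and $\hat\mu$ is an ordinary sample mean of it: no root of a nonlinear equation is ever solved. The total sample budget is $m(2k+1)+n_1$, which is exactly the $n$ in the statement.

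For the first stage, each block average $B_j$ has variance $\sigma^2/m$, so Chebyshev with radius $t:=c\sqrt{\epsilon/(1+c^2)}$ gives $\prob(|B_j-\mu|>t|\mu|)\le \sigma^2/(mt^2\mu^2)\le c^2/(mt^2)\le 1/8$. The median $\hat\mu_1$ leaves $[\mu-t|\mu|,\mu+t|\mu|]$ only if more than half of the $2k+1$ blocks do, a binomial tail bounded by a constant times $(7/16)^k$; a Stirling estimate produces the constant $7/(48\sqrt{\pi})$ (hence the $\sqrt{\pi}$), and the stated $k$ pushes this below $\delta/2$. Thus, off an event of probability $\le\delta/2$, $|\mu-\hat\mu_1|\le t|\mu|$, so $|\mu|\in[\,|\hat\mu_1|/(1+t),\,|\hat\mu_1|/(1-t)\,]$ and $\mean[(X-\hat\mu_1)^2]=\sigma^2+(\mu-\hat\mu_1)^2\le\mu^2(c^2+t^2)$. (Replacing $X$ by $-X$ if needed, assume $\mu>0$.)

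Conditioning on this good event, the second stage is a Catoni-style exponential-Markov bound. Since $\psi$ is odd with $\exp(\psi(u))\le1+u+u^2/2$, the moment generating function of $\psi(\hat\alpha(X-\hat\mu_1))$ at parameter $1$ is at most $1+\hat\alpha(\mu-\hat\mu_1)+\tfrac12\hat\alpha^2\mean[(X-\hat\mu_1)^2]$, and Chernoff's inequality then gives, for each tail,
\[
\prob\big(\hat\mu>(1+\epsilon)\mu\big)\le e^{-n_1 h(\hat\alpha)},\qquad \prob\big(\hat\mu<(1-\epsilon)\mu\big)\le e^{-n_1 h(\hat\alpha)},\qquad h(a):=a\epsilon\mu-\tfrac12 a^2\mean[(X-\hat\mu_1)^2].
\]
Because $n_1\ge 2c^2\epsilon^{-2}\ln(4/\delta)/(1-\epsilon)$, it suffices to show the chosen $\hat\alpha$ satisfies $h(\hat\alpha)\ge \epsilon^2(1-\epsilon)/(2c^2)$; a union bound over the first-stage failure and the two tails then yields total failure $\le\delta$.

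The hard part is precisely this inequality — choosing $\hat\alpha$ without knowing $\mu$ or $\sigma$. The quadratic $h$ is maximized at $a^\star=\epsilon\mu/\mean[(X-\hat\mu_1)^2]$ with maximum $\epsilon^2\mu^2/(2\mean[(X-\hat\mu_1)^2])$, and since $\mean[(X-\hat\mu_1)^2]\le\mu^2(c^2+t^2)$ and $\epsilon<1$ this maximum strictly exceeds the target $\epsilon^2(1-\epsilon)/(2c^2)$; hence $\{a:h(a)\ge\epsilon^2(1-\epsilon)/(2c^2)\}$ is a genuine interval $a^\star[1-w,1+w]$. One then has to (i) lower bound its relative half-width $w$ in terms of $c,\epsilon,t$; (ii) observe that $a^\star$ is determined only up to a relative factor in $[1/(1+t),1/(1-t)]$ through the uncertainty of $\hat\mu_1$; and (iii) exhibit one explicit $\hat\alpha\asymp \epsilon/(|\hat\mu_1|(c^2+t^2))$ that lies in $a^\star[1-w,1+w]$ \emph{simultaneously} for all admissible $(\mu,\sigma)$ — both when $\mean[(X-\hat\mu_1)^2]$ is near its worst case $\mu^2(c^2+t^2)$ and when it is small. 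Feasibility of a single such $\hat\alpha$ reduces to a short inequality that holds exactly because $\epsilon<1$ (and degenerates as $\epsilon\to0$), which is why the constant $2/(1-\epsilon)$ is essentially forced by this step. Everything else — the one-line verification $\exp(\psi(u))\le1+u+u^2/2$ (which is where the lack of any moment beyond the second is exploited), the Chebyshev and Stirling bounds of the first stage, and the union bound — is routine bookkeeping.
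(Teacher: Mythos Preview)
Your two–stage architecture, the use of Catoni's $\psi$, and the Chernoff calculation leading to the quadratic $h(a)=a\epsilon\mu-\tfrac12 a^{2}\mean[(X-\hat\mu_1)^2]$ are exactly the paper's route. The gap is in the step you yourself flag as ``the hard part'': you never exhibit an $\hat\alpha$ for which $h(\hat\alpha)\ge \epsilon^{2}(1-\epsilon)/(2c^{2})$ holds for \emph{every} admissible $(\mu,\sigma)$, and with your first stage as written this inequality in fact fails for the natural choices. Concretely, if $\hat\mu_1$ is the raw median of block means and $\hat\alpha=\epsilon/(c^{2}\hat\mu_1)$, then writing $\xi=\mu/\hat\mu_1-1$ one gets
\[
h(\hat\alpha)\ \ge\ \frac{\epsilon^{2}}{2c^{2}}\bigl(1-\xi^{2}(1+1/c^{2})\bigr),
\]
but the first stage only controls $\hat\mu_1/\mu-1\in[-t,t]$, which gives $\xi\in[-t/(1+t),\,t/(1-t)]$ and hence $\xi^{2}(1+1/c^{2})\le \epsilon/(1-t)^{2}$, strictly worse than $\epsilon$. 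So the exponent is only $\ge(\epsilon^{2}/(2c^{2}))(1-\epsilon/(1-t)^{2})$, and the stated $n_1=\lceil 2c^{2}\epsilon^{-2}\ln(4/\delta)/(1-\epsilon)\rceil$ is not enough. Your proposed alternative $\hat\alpha\asymp\epsilon/(\hat\mu_1(c^{2}+t^{2}))$ does not fix this either; the interval $a^{\star}[1-w,1+w]$ drifts with the unknown $\mu$ through both the linear and quadratic terms of $h$, and your ``short inequality'' is asserted, not checked.

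The missing idea is a one–line trick the paper uses between the two stages: replace the raw median by $\hat\mu_1\leftarrow \text{median}/(1-t^{2})$ and use this biased value as \emph{both} the centering point in $g$ and in $\hat\alpha=\epsilon/(c^{2}\hat\mu_1)$. The point of the bias is that $|\hat\mu_1/\mu-1|\le t$ implies $|\mu/\hat\mu_1-1|\le t$ after dividing by $1-t^{2}$, so now $|\xi|\le t$ directly, and $\xi^{2}(1+1/c^{2})\le t^{2}(1+1/c^{2})=\epsilon$ on the nose. With that, the Chernoff exponent is $\ge(\epsilon^{2}/(2c^{2}))(1-\epsilon)$ exactly, and your $n_1$ suffices. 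Once you insert this recentering, the rest of your sketch (Chebyshev with $\nu^{2}=1/8$ for each block, the $\betadist(r+1,r+1)$ tail bound via Stirling for the median, the two one–sided Catoni tails, and the union bound) matches the paper and is correct.
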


This constant in the leading order term is the best possible.

\begin{theorem}
\label{THM:main2}
  Given $\epsilon$ and $\delta$ positive,
  let $\hat \mu:\real^n \rightarrow \real$ be an
  $(\epsilon,\delta)$-randomized
  algorithm for 
  all distributions $X$ with $\mean[X] = \mu$ and $\var(X) = \sigma^2$
  satisfying $\sigma^2/\mu^2 = c^2$.  Then 
  \[
     n \geq 2 \epsilon^{-2}c^2\left[\ln\left(\frac{1}{\sqrt{2\pi}\delta}\right) - 
       \ln\left(\frac{2\ln(1/[\sqrt{2\pi}\delta])+1}{\sqrt{2\ln(1/[\sqrt{2\pi}\delta])}}
  \right) \right].
  \]
\end{theorem}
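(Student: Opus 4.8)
The plan is a two-point (Le Cam) argument, but with a family of instances chosen so that the induced hypothesis test is exactly solvable; this is what yields both the sharp leading constant and the $\sqrt{2\pi}$ in the statement. I would work inside the Gamma scale family: for $\mu>0$ let $P_\mu$ denote the Gamma distribution with shape $1/c^2$ and mean $\mu$ (equivalently, rate $1/(c^2\mu)$). Every such $P_\mu$ has $\var = c^2\mu^2$, so it is a legitimate input, and --- crucially --- the likelihood ratio $dP_{\mu_1}^{\otimes n}/dP_{\mu_0}^{\otimes n}$ is a monotone function of the sample sum $S_n=\sum_{i=1}^n X_i$, while $S_n$ is itself Gamma distributed.

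First I would fix $\mu_0>0$ and set $\mu_1 = \mu_0(1+\epsilon)/(1-\epsilon)$, so that the target intervals $[(1-\epsilon)\mu_0,(1+\epsilon)\mu_0]$ and $[(1-\epsilon)\mu_1,(1+\epsilon)\mu_1]$ are disjoint except for a shared endpoint. If $\hat\mu$ is an $(\epsilon,\delta)$-algorithm using $n$ samples, the event that $\hat\mu$ lands in the first interval is a test of $P_{\mu_0}^{\otimes n}$ against $P_{\mu_1}^{\otimes n}$ with both error probabilities at most $\delta$. By the Neyman-Pearson lemma and the monotone likelihood ratio, such a test can exist only if there is a threshold $t$ with $\prob_{\mu_0}(S_n>t)\le\delta$ and $\prob_{\mu_1}(S_n\le t)\le\delta$. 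Since under $P_{\mu_j}$ the sum $S_n$ is Gamma with shape $m:=n/c^2$ and mean $n\mu_j$, writing $q_\alpha$ for the $\alpha$-quantile of a Gamma$(m,1)$ variable, this is equivalent to the single inequality $(1-\epsilon)\,q_{1-\delta}\le(1+\epsilon)\,q_\delta$.

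The analytic heart is to bound $q_{1-\delta}$ and $q_\delta$ sharply. A Gamma$(m,1)$ variable has mean $m$ and standard deviation $\sqrt m$, and near its mean its log-density agrees, up to a correction of order $1/\sqrt m$, with that of $\normdist(m,m)$; it is Stirling's approximation of $\ln\Gamma(m)$ that contributes the $\tfrac12\ln(2\pi)$ here, hence the $\sqrt{2\pi}$ in the final bound. Inserting such two-sided moderate-deviation estimates into $(1-\epsilon)q_{1-\delta}\le(1+\epsilon)q_\delta$ forces $m\ge w^2/\epsilon^2$ up to lower-order terms, where $w$ is determined by $\bar\Phi(w)=\delta$; that is, $n\ge c^2\epsilon^{-2}w^2$. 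To recast this in the stated closed form I would use the elementary bounds $\tfrac{w}{w^2+1}\,\phi(w)\le\bar\Phi(w)\le\phi(w)$, the latter valid for $w\ge1$: with $\bar\Phi(w)=\delta$ these give $w^2\le 2\ln(1/[\sqrt{2\pi}\delta])$ and $w^2\ge 2\ln(1/[\sqrt{2\pi}\delta])-2\ln\frac{w^2+1}{w}$, and since $w\mapsto w+1/w$ is increasing the first bound substitutes into the second to produce exactly the bracketed expression in Theorem~\ref{THM:main2}.

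I expect the main obstacle to be precisely the moderate-deviation step: the Gamma-versus-Gaussian comparison has to be carried out carefully enough to lose nothing to first order --- otherwise the leading constant degrades below $2$ --- and precisely enough to recover the stated lower-order term rather than merely a bound of the form $(2-o(1))c^2\epsilon^{-2}\ln(1/\delta)$. A minor additional issue is the measure-zero overlap of the two target intervals (and a possible atom of $\hat\mu$ at the shared endpoint), which is handled by taking $\mu_1=\mu_0(1+\epsilon)(1+\eta)/(1-\epsilon)$ with $\eta>0$ and letting $\eta\downarrow0$.
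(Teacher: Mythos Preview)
Your approach is sound in outline but takes a harder road than the paper, and the detour you yourself flag as the ``main obstacle'' is exactly what the paper sidesteps. The paper does not use the Gamma family at all: it works directly with the Gaussian scale family $\normdist(\mu,c^2\mu^2)$, which already satisfies $\sigma^2/\mu^2=c^2$. For this family it invokes (a rephrasing of) Catoni's Proposition~6.1, which says that for normals no estimator can beat the sample average $\bar Y$ in both tails simultaneously; since $\bar Y\sim\normdist(\mu,c^2\mu^2/n)$, this immediately yields $\delta\ge\prob(Z\ge\epsilon\sqrt n/c)$ for standard normal $Z$. The closed form in the theorem then drops out of Gordon's lower bound $\prob(Z\ge a)\ge\tfrac{a}{a^2+1}\cdot\tfrac{1}{\sqrt{2\pi}}\exp(-a^2/2)$ via precisely the tail manipulation you describe in your last paragraph.

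So both arguments are two-point/minimax at heart and both finish with the same Gordon inequality; the difference is that by choosing Gaussians the paper lands on $\prob(Z\ge\epsilon\sqrt n/c)\le\delta$ in one step, whereas your Gamma route inserts an additional moderate-deviation comparison between Gamma and Gaussian quantiles. That step would recover the leading $(2-o(1))c^2\epsilon^{-2}\ln(1/\delta)$, but matching the stated lower-order term exactly would require controlling the $O(1)$ skewness-driven discrepancy between $q_{1-\delta},q_\delta$ and $m\pm\sqrt m\,w$, which is genuine work and need not reproduce the identical expression. If your motivation for Gamma was to keep the samples positive, note that this is unnecessary here: the hypothesis is only $\sigma^2/\mu^2=c^2$, so Gaussians are legitimate test instances, and using them removes the obstacle entirely.
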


The remainder of the paper is organized as follows.  The next section describes
the two-step algorithm and proves correctness for each step.  Section~\ref{SEC:lower}
then shows the lower bound on the number of samples needed, and 
Section~\ref{SEC:applications} considers several of the applications mentioned in
the introduction in more detail.

\section{The Algorithm}

Define the $\Psi$ function as follows.
\[
\Psi(u) = \ln(1 + u + u^2/2)\ind(u \geq 0)
         -\ln(1 - u + u^2/2)\ind(u \leq 0).
\]
This function was used in~\cite{catoni2012} as part of the $M$-estimator.

For $u \in [-1,1]$, the value of $\Psi(u)$ is approximately $u$ 
(see Figure~\ref{FIG:psi}.)  For
$u$ greater than 1 in magnitude, the value of $\Psi(u)/u$ becomes close to 0.  
For a constant $\alpha > 0$, $\alpha^{-1} \Psi(\alpha u)$ is a scaled version
of $\Psi$ that is close to 
$u$ for $u \in [-1/\alpha,1/\alpha]$.

Suppose that $\hat \mu_1$ is an initial estimate for $\mu$.  Then
\[
X_i = \hat \mu_1 + (X_i - \hat \mu_i)
\]
has mean $\mu$ but is also susceptible to outliers in the $X_i$
distribution.  By replacing this with
\[
W_i = \hat \mu_1 + \alpha^{-1}\Psi(\alpha\cdot(X_i - \hat \mu_i)),
\]
the value of $W_i$ will be close to $X_i$ when $|X_i - \hat \mu_i| \leq \alpha^{-1}$, but always has a light-tailed distribution because of the 
logarithm function.

\begin{center}
\begin{figure}[ht!]
  \begin{center}
  \begin{tikzpicture}[yscale=1.1,xscale=1.6]
    \draw (-2.1,0) -- (2.1,0);
    \draw (0,-2.1) -- (0,2.1);
    \draw (-2,0.1) -- (-2,-0.1) node[below] {-2};
    \draw (2,0.1) -- (2,-0.1) node[below] {2};
    \draw (0.1,2) -- (-0.1,2) node[left] {2};
    \draw (0.1,-2) -- (-0.1,-2) node[left] {-2};
    \draw[dashed,color=red] plot[domain=-2.1:2.1] (\x,\x);
    \draw[color=blue] plot[domain=0:2.1] (\x,{ln(1+\x+\x*\x/2});
    \draw[color=blue] plot[domain=-2.1:0] (\x,{-ln(1-\x+\x*\x/2});
    \draw (2.2,1.65) node[anchor=west] {$y=\Psi(x)$};
    \draw (2.2,2.1) node[anchor=west] {$y=x$};
  \end{tikzpicture}
  \end{center}
  \caption{The functions $\Psi(x)$ and $x$ over $[-2,2]$.}
  \label{FIG:psi}
\end{figure}
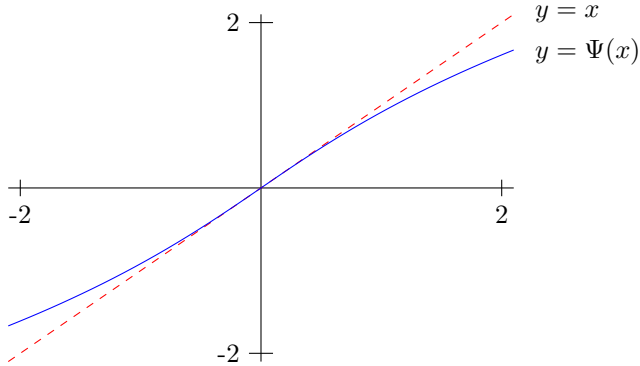
\end{center}

The algorithm proceeds as follows.  The first step uses a median-of-means
approach to find $\hat \mu_1$ that is a $(\sqrt{\epsilon},\delta/2)$
approximation of $\mu$.  Given that the first step
did not fail, the next step then uses the sample average
of the $W_i$ variables to create the final estimate $\hat \mu$ that is an $(\epsilon,\delta/2)$
approximation.  The chance that either step fails
is at most $\delta$.
\begin{enumerate}
  \item  
  The first step is to construct a median-of-means estimator for 
  $\mu$~\cite{jerrumvv1986}.
  Let $\epsilon_1 = \sqrt{\epsilon (c^2/(1+c^2))},$ 
  $k = \lceil 8c^2\epsilon_1^{-2}\rceil,$ and 
  $m = 2\lceil  \ln(7/[48\sqrt{\pi}\delta)/\ln(16/7)\rceil + 1$.  
  Let $S$ have the distribution of
  the sample average of $k$ independent draws from $X$.  
  Draw $S_1,\ldots,S_m \sim S$
  independently,
  and let $\hat \mu_1 = \text{median}(\{S_i\})/(1-\epsilon_1^2)$.  
  \item
  Let $n = \lceil 2c^2\epsilon^{-2}\ln(4/\delta)/(1-\epsilon)\rceil$, and
  $\alpha = \epsilon/[c^2 \hat \mu_1]$.  Draw
  $X_{1},\ldots,X_{n}$
  independently.  For all $i$, let
  \[
  W_i = \hat \mu_1 + \alpha^{-1}\Psi(\alpha \cdot 
    (X_i - \hat \mu_1)).
  \]
  Set $\hat \mu = (W_1 + \cdots + W_n)/n$.
\end{enumerate}

\subsection{The first step of the algorithm}
The first step of the algorithm is
the powering method of 
Jerrum, Valiant, and Vazirani~\cite{jerrumvv1986}
applied to the sample averages.  This technique
was also used in~\cite{alonms1999},
and later referred to as the median-of-means method~\cite{hsus2014}.  These
authors did not attempt to optimize
the constants in their arguments, and 
so we repeat the proof here so we can
see exactly how the choice of constant
enters into the failure bound.

Suppose that we have random variables whose 
relative standard deviation is at most $\nu \epsilon$.  What is
the chance that the median of $m = 2r + 1$ draws from the random
variable falls into $[\mu(1-\epsilon),\mu(1+\epsilon)]$?

To answer this question, first consider the probability that 
a beta distributed random variable with both parameters equal to an integer
$r$ falls into a subinterval of $[0,1]$.

\begin{lemma}
  Let $B \sim \betadist(r+1,r+1)$ denote a random variable with density 
  $f_B(x) = [(2r+1)!/(r!r!)]x^r(1-x)^r\ind(x \in [0,1])$.  For any
  $0 \leq a \leq 1/2 \leq b \leq 1$ with $1 - (b-a) \leq 1/2$,
  \[
  \prob(B \notin [a,b]) \leq 2\frac{4^r}{\sqrt{\pi r}} \cdot  
    \frac{[(1-(b-a))(b-a)]^{r+1}}{2(b - a)-1}
  \]
\end{lemma}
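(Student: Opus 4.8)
The plan is to bound $\prob(B \notin [a,b])$ by splitting into the two tail events $\{B < a\}$ and $\{B > b\}$, integrating the beta density directly, and then using a Stirling-type bound on the normalizing constant $(2r+1)!/(r!r!)$. First I would write
\[
\prob(B < a) = \frac{(2r+1)!}{r!\,r!}\int_0^a x^r(1-x)^r\,dx,
\qquad
\prob(B > b) = \frac{(2r+1)!}{r!\,r!}\int_b^1 x^r(1-x)^r\,dx.
\]
On $[0,a]$ the function $x \mapsto x(1-x)$ is increasing (since $a \leq 1/2$), so it is bounded above by $a(1-a)$; likewise on $[b,1]$ it is increasing in $1-x$ and bounded by $b(1-b)$. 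Thus $\int_0^a x^r(1-x)^r\,dx \leq a\,[a(1-a)]^r$ and $\int_b^1 x^r(1-x)^r\,dx \leq (1-b)\,[b(1-b)]^r$. Adding the two tails gives a bound of the form $\frac{(2r+1)!}{r!\,r!}\bigl(a[a(1-a)]^r + (1-b)[b(1-b)]^r\bigr)$.

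The next step is to simplify this to the single expression in the statement, which involves only the quantity $b-a$. Here I would exploit the hypotheses: writing $\ell = b-a$, the condition $a \le 1/2 \le b$ forces $a \le \ell$ and $1-b \le \ell$, and one checks that $a(1-a) \le \ell(1-\ell)$ and $b(1-b)\le \ell(1-\ell)$ whenever $1-\ell \le 1/2$ (the product $x(1-x)$ is maximized at $1/2$ and is monotone on each side, with $a, 1-b, \ell$ all lying in $[1-\ell, \ldots]$ on the appropriate side of $1/2$). Hence each tail is at most $\ell\,[\ell(1-\ell)]^r \cdot \tfrac{(2r+1)!}{r!\,r!}$ up to the front factor; combining the two and keeping track of the factor $1/(2\ell - 1)$ that appears — which comes from a slightly sharper integration, namely bounding $\int_0^a + \int_b^1$ against a geometric-type sum or directly against $\frac{[\ell(1-\ell)]^{r+1}}{2\ell-1}$ using that $1-\ell < \ell(1-\ell)/(\ell) $-style manipulations — yields the claimed shape. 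I would double-check the exact constant by comparing at the symmetric extreme $a = 1-b$, where the bound should be tight in order.

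The last ingredient is the Stirling bound $\binom{2r+1}{r} = \frac{(2r+1)!}{r!\,r!} \le \frac{2 \cdot 4^r}{\sqrt{\pi r}}$, which follows from the standard estimate $\binom{2r}{r} \le 4^r/\sqrt{\pi r}$ together with $\binom{2r+1}{r} = \tfrac{2r+1}{r+1}\binom{2r}{r} \le 2\binom{2r}{r}$. Substituting this in front of the $[\ell(1-\ell)]^{r+1}/(2\ell-1)$ term produces exactly $2\,\frac{4^r}{\sqrt{\pi r}}\cdot\frac{[(1-(b-a))(b-a)]^{r+1}}{2(b-a)-1}$.

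The main obstacle I anticipate is getting the algebra to collapse cleanly to the stated form with the precise factor $1/(2(b-a)-1)$ and no slack: the naive "bound each tail by its length times the max of $x^r(1-x)^r$" argument gives the right exponential rate $[\ell(1-\ell)]^r$ but produces a front factor like $\ell$ rather than $[\ell(1-\ell)]/(2\ell-1)$, so I will need the sharper integration step — e.g. bounding $x^r(1-x)^r$ on $[0,a]$ by a multiple of its value at $x=a$ times a decaying geometric factor, or integrating $x^r(1-x)^r \le [\ell(1-\ell)]^r (x/a)^r$-type comparisons — to recover the extra power of $\ell(1-\ell)$ and the $1/(2\ell-1)$ denominator. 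Pinning down that step, and verifying the inequalities $a(1-a), b(1-b) \le \ell(1-\ell)$ hold under exactly the stated hypothesis $1-(b-a) \le 1/2$, is where the real care is needed; everything else is bookkeeping.
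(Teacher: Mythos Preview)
Your overall plan --- bound the two tails of the beta density, control the integral of $(x(1-x))^r$, then apply Stirling to the normalizing constant --- is the same as the paper's, and your reduction to the single parameter $\ell=b-a$ via $a(1-a),\,b(1-b)\le \ell(1-\ell)$ is correct (indeed $a\le 1-\ell\le 1/2$ and $b\ge \ell\ge 1/2$, and $x(1-x)$ is monotone on each half).

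There is, however, a genuine gap, and it is larger than your final paragraph suggests. First a slip: $(2r+1)!/(r!\,r!)$ is not $\binom{2r+1}{r}$ but $(2r+1)\binom{2r}{r}$, so the correct Stirling bound is $(2r+1)\,4^r/\sqrt{\pi r}$, carrying an extra factor of order $r$. Consequently the integral estimate must supply a compensating $1/(r+1)$, and your ``max times length'' bound $\int_0^a (x(1-x))^r\,dx \le a\,[a(1-a)]^r$ does not: it is off by exactly that factor of $r+1$, not merely by the constant $1/(2\ell-1)$ you focus on. Your suggested sharpenings (geometric decay, $(x/a)^r$-type comparisons) are not worked out and do not obviously deliver both the needed $1/(r+1)$ and the denominator $2\ell-1$.

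The paper closes this gap with two clean ideas you are missing. First, by symmetry of $f_B$ about $1/2$ it replaces $[0,a]\cup[b,1]$ by the single interval $[0,t]$ with $t=1-\ell$; this is what makes the eventual constant $2$ rather than $4$. Second --- and this is the crux --- it exploits concavity of $x-x^2$: on $[0,t]$ one has $x-x^2\le t(1-t)+(x-t)(1-2t)$, the tangent line at $t$. Integrating the $r$th power of this \emph{linear} upper bound exactly gives
\[
\int_0^t (x-x^2)^r\,dx \;\le\; \frac{[t(1-t)]^{r+1}}{(r+1)(1-2t)}
\;=\; \frac{[\ell(1-\ell)]^{r+1}}{(r+1)(2\ell-1)},
\]
so that the $(r+1)$ cancels the $(2r+1)$ from Stirling (leaving a factor $\le 2$), and the $1/(2\ell-1)$ you were hunting for is simply the reciprocal of the tangent slope $1-2t$.
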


\begin{proof}
  Density $f_B$ is symmetric about its 
  unique local maximum at $1/2$, so
  \[
  \int_{x \in [0,a] \cup [b,1]} f_B(r)\ dr \leq 
    \int_{x \in [0,a] \cup [a,a+1-b]} f_B(r) =  \int_{x \in [0,1-(b-a)]} f_B(r) \ dr.
  \]
  
  Note $x^r(1-x)^r = (x - x^2)^r$.  Let $t = 1 - (b - a).$  Then 
  $t \leq 1/2$,
  $[x - x^2]' = 1 - 2x > 0$ and $[x - x^2]'' = -2x \leq 0$, 
  so the function lies below its tangent line at $t$.  That is, 
  \[
  (\forall x \in [0,t])(x - x^2) \leq t(1-t)+(x-t)(1-2t).
  \]
  
  Then 
  \[
  \int_{x \in [0,t]} [t(1-t)+(x-t)(1-2t)]^{r+1} 
     \ dr \leq \frac{[t(1-t)]^r}{(r+1)(1-2t)}. 
  \]
  Using Stirling's formula to give
  $(2r+1)(2r)!/(r!r!) \leq (2r+1)4^r/\sqrt{\pi r}$
  completes the proof.
\end{proof}

\begin{lemma}
  Let $A_1,\ldots,A_{2r+1}$ be iid with mean $\mu$ and variance
  at most $\nu^2 \epsilon^2 \mu^2$ where $\nu^2 \leq 1/2$.  Then
  \[
  \prob(\median(\{A_i\}) \notin [\mu(1-\epsilon),\mu(1+\epsilon)]) <
    \frac{(\nu^2)(1-\nu^2)}
        {\sqrt{\pi r}(1-2\nu^2)}
        \exp(r\ln(4(\nu^2)(1-\nu^2))).
  \]
\end{lemma}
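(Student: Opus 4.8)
The plan is to reduce the statement about the median of $2r+1$ iid variables to the beta-distribution tail bound in the previous lemma, via the standard order-statistics-to-beta correspondence. First I would note that $\median(\{A_i\}) \notin [\mu(1-\epsilon),\mu(1+\epsilon)]$ exactly when more than $r$ of the $A_i$ lie below $\mu(1-\epsilon)$ or more than $r$ lie above $\mu(1+\epsilon)$. By Chebyshev's inequality, since $\var(A_i) \leq \nu^2\epsilon^2\mu^2$, we have $p_- := \prob(A_i < \mu(1-\epsilon)) \leq \nu^2\epsilon^2\mu^2/(\epsilon\mu)^2 = \nu^2$, and likewise $p_+ := \prob(A_i > \mu(1+\epsilon)) \leq \nu^2$. (One has to be mildly careful with strict versus non-strict inequalities and with the sign of $\mu$, but Chebyshev handles both tails symmetrically.)

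Next I would invoke the classical fact that if $U_{(r+1)}$ is the median of $2r+1$ iid uniforms on $[0,1]$ then $U_{(r+1)} \sim \betadist(r+1,r+1)$, and more generally that the number of $A_i$ falling in a tail region is stochastically dominated in the relevant direction by a Binomial, whose tail probability equals a Beta tail probability (the Binomial–Beta identity $\prob(\mathrm{Bin}(n,p) \geq k) = \prob(B_{k,n-k+1} \leq p)$). Concretely, $\prob(\text{median} < \mu(1-\epsilon)) = \prob(\mathrm{Bin}(2r+1,p_-) \geq r+1) = \prob(B \leq p_-)$ with $B \sim \betadist(r+1,r+1)$, and this is monotone in $p_-$, so it is at most $\prob(B \leq \nu^2)$; symmetrically the upper tail contributes at most $\prob(B \geq 1-\nu^2)$. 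Union-bounding, $\prob(\median(\{A_i\}) \notin [\mu(1-\epsilon),\mu(1+\epsilon)]) \leq \prob(B \notin [\nu^2, 1-\nu^2])$.

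Now I would apply the preceding lemma with $a = \nu^2$ and $b = 1-\nu^2$. The hypothesis $\nu^2 \leq 1/2$ gives $a \leq 1/2 \leq b$, and $1-(b-a) = 2\nu^2 \leq 1$, but the lemma actually wants $1-(b-a) \leq 1/2$, i.e. $2\nu^2 \leq 1/2$; here $b-a = 1-2\nu^2$, so $1-(b-a) = 2\nu^2$, and we need $\nu^2 \leq 1/4$ — so I should either re-examine whether the lemma's hypothesis is really $1-(b-a)\le 1/2$ or whether $\nu^2 \le 1/2$ genuinely suffices (the denominator $2(b-a)-1 = 1-4\nu^2$ also wants $\nu^2 < 1/4$ to stay positive). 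Assuming the parameters are in the valid range, the lemma gives the bound $2\cdot\frac{4^r}{\sqrt{\pi r}}\cdot\frac{[(1-(b-a))(b-a)]^{r+1}}{2(b-a)-1}$; substituting $b-a = 1-2\nu^2$ and $1-(b-a) = 2\nu^2$ yields $(2\nu^2(1-2\nu^2))^{r+1}$ over $(1-4\nu^2)$, which after pulling out one factor and writing $4\nu^2(1-\nu^2)$ via $2\nu^2(1-2\nu^2) \le$ manipulation should match $\frac{\nu^2(1-\nu^2)}{\sqrt{\pi r}(1-2\nu^2)}\exp(r\ln(4\nu^2(1-\nu^2)))$.

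The main obstacle I anticipate is the bookkeeping in this last algebraic matching step: reconciling the exponent base $2\nu^2(1-2\nu^2)$ coming directly out of the beta lemma with the stated base $4\nu^2(1-\nu^2)$, and making sure the constant prefactor and the $(1-2\nu^2)$ versus $(1-4\nu^2)$ denominators line up — this presumably uses the elementary inequality $2\nu^2(1-2\nu^2) \leq 4\nu^2(1-\nu^2)$ (equivalently $1-2\nu^2 \le 2(1-\nu^2)$, always true) to pass to the claimed (weaker, cleaner) bound, absorbing the slack into the prefactor. The probabilistic reduction (Chebyshev plus Binomial–Beta) is routine; the one genuine subtlety is tracking the precise range of $\nu^2$ for which every denominator stays positive and every hypothesis of the earlier lemma is met, and confirming the strict inequality in the conclusion survives the union bound.
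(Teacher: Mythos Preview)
Your probabilistic reduction (Chebyshev plus the Binomial--Beta identity) is fine, but the choice $a=\nu^2$, $b=1-\nu^2$ when invoking the beta lemma is where things go wrong, and the algebraic mismatches you flagged are symptoms of this. With that choice you get $b-a=1-2\nu^2$, hence $1-(b-a)=2\nu^2$, so the hypothesis of the beta lemma needs $\nu^2\le 1/4$ (not $1/2$), the denominator becomes $1-4\nu^2$ (not $1-2\nu^2$), and the exponential base becomes $4\cdot 2\nu^2(1-2\nu^2)=8\nu^2(1-2\nu^2)$, which for the relevant value $\nu^2=1/8$ equals $3/4$, strictly larger than the target base $4\nu^2(1-\nu^2)=7/16$. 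Your proposed inequality $2\nu^2(1-2\nu^2)\le 4\nu^2(1-\nu^2)$ is true but useless here: it only lets you pass to a \emph{weaker} bound, not the stated one, and it cannot repair the denominator.

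The fix, and what the paper does, is to apply the beta lemma with $a=x_1:=\prob(A_i<\mu(1-\epsilon))$ and $b=x_2:=\prob(A_i\le\mu(1+\epsilon))$ themselves, not with the crude upper bounds $\nu^2$ and $1-\nu^2$. The point is that two-sided Chebyshev gives $p_-+p_+\le\nu^2$ (not merely $p_-\le\nu^2$ and $p_+\le\nu^2$ separately), so $b-a=x_2-x_1\ge 1-\nu^2$ and hence $t:=1-(b-a)\le\nu^2\le 1/2$. Now the beta lemma applies for the full range $\nu^2\le 1/2$, and since $t\mapsto t(1-t)$ is increasing on $[0,1/2]$ you get $(1-(b-a))(b-a)\le\nu^2(1-\nu^2)$ and $2(b-a)-1\ge 1-2\nu^2$, which substitute directly to give the stated bound with base $4\nu^2(1-\nu^2)$ and denominator $1-2\nu^2$.
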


\begin{proof}
  By Chebyshev's inequality, 
  \[
  \prob(A_i \notin [\mu(1-\epsilon),\mu(1+\epsilon)]) \leq \frac{\nu^2 \epsilon^2 \mu^2}{(\epsilon \mu)^2} = \nu^2.
  \]
  
  Let $x_1 = \prob(A_i < a)$ and $x_2 = \prob(A_i \leq b)$.  Construct a uniform random
  variable over $[0,1]$ as follows.  If $A_i < a$, then let $U_i \sim \unifdist([0,x_1))$.
  If $A_i \in [a,b]$, let $U_i \sim \unifdist([x_1,x_2])$.  Finally, if $A_i > b$, then
  let $U_i \sim \unifdist((x_2,1])$.  Note that  
  \[
  \prob(\median(\{U_i\}) \in [x_1,x_2]) = \prob(\median(\{A_i\}) \in [a,b]).
  \]
  The median of $2r+1$ iid uniform $[0,1]$ random variables is   well known to have a beta distribution: 
  $\median(\{U_i\}) \sim \betadist(r+1,r+1)$.  So the previous lemma can be used
  to state
  \[
  \prob(\median(\{U_i\}) \notin [x_1,x_2])
     \leq \frac{(\nu^2)(1-\nu^2)[4(\nu^2)(1-\nu^2)]^{r}}
        {\sqrt{\pi r}(1-2\nu^2)}
  \]
\end{proof}

Hence the failure probability is going down exponentially at rate
$\ln(4(\nu^2)(1-\nu^2)).$  Now for an integer $k$, consider
$X_1,\ldots,X_k \sim X$ iid, and 
\[
S = (X_1 + \cdots + X_k)/k.
\]
Then $\mean[S] = \mean[X] = \mu$ and 
$\var[S] = k \var[X]/k^2 = \sigma^2/k.$

In particular, for $\sigma^2/\mu^2 \leq c^2$, and $k = \lceil \epsilon^{-2}\nu^{-2}c^2\rceil$, 
$\var[S] \leq (\nu \epsilon \mu)^2.$  To take the median of 
$2r+1$ draws of the sample average of $k$ draws from the $\{X_i\}$ takes
$\Theta(kr) = \Theta(-1/(\nu^2 \ln(4\nu^2(1-\nu^2))))$ samples.

\begin{lemma}[Median-of-means]
  Suppose $X_1,X_2,\ldots$ are as in~\eqref{EQN:bounded}.
  For $k = \lceil 8c^2\epsilon^{-2}\rceil$ let 
  $S$ be distributed as $(X_1 + \cdots + X_k)/k$.  Let
  $m = 2\lceil \ln(7/[48\sqrt{\pi}\delta])/\ln(16/7)\rceil + 1.$
  Let $S_1,\ldots,S_m \sim S$ then 
  \[
  \prob(|\median(\{S_i\}) - \mu| > \epsilon \mu) \leq \delta.
  \]
\end{lemma}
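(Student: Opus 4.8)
The plan is to apply the preceding lemma (the one for $A_1,\dots,A_{2r+1}$ with $\nu^2 \le 1/2$) to the sample averages $S_1,\dots,S_m$ with the specific choice $\nu^2 = 1/8$, and then read off the numerical constants. To set this up, note that since $k = \lceil 8c^2\epsilon^{-2}\rceil \ge 8c^2\epsilon^{-2}$, independence of the $k$ summands making up $S$ gives
\[
\var[S] = \frac{\sigma^2}{k} \le \frac{\sigma^2\epsilon^2}{8c^2} \le \frac{c^2\mu^2\epsilon^2}{8c^2} = \frac{1}{8}\epsilon^2\mu^2,
\]
where the last inequality uses $\sigma^2/\mu^2 \le c^2$ from~\eqref{EQN:bounded}. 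Hence $S_1,\dots,S_m$ (with $m = 2r+1$) are iid with mean $\mu$ and variance at most $\nu^2\epsilon^2\mu^2$ for $\nu^2 = 1/8 \le 1/2$, so the previous lemma applies directly.

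Substituting $\nu^2 = 1/8$ into that lemma's bound, one has $1-\nu^2 = 7/8$, $1-2\nu^2 = 3/4$, and $4\nu^2(1-\nu^2) = 7/16$, so $\ln(4\nu^2(1-\nu^2)) = -\ln(16/7)$ and
\[
\prob\bigl(\median(\{S_i\}) \notin [\mu(1-\epsilon),\mu(1+\epsilon)]\bigr) < \frac{(1/8)(7/8)}{(3/4)\sqrt{\pi r}}\left(\frac{7}{16}\right)^{r} = \frac{7}{48\sqrt{\pi r}}\left(\frac{7}{16}\right)^{r} \le \frac{7}{48\sqrt{\pi}}\left(\frac{7}{16}\right)^{r},
\]
the last step using $r \ge 1$. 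The event $|\median(\{S_i\}) - \mu| > \epsilon\mu$ is the complement of membership in the interval with endpoints $\mu(1\pm\epsilon)$, so it suffices to make this final expression at most $\delta$.

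That inequality, $\frac{7}{48\sqrt{\pi}}(7/16)^r \le \delta$, is equivalent after taking logarithms to $r\ln(16/7) \ge \ln(7/[48\sqrt{\pi}\delta])$, so $r = \lceil \ln(7/[48\sqrt{\pi}\delta])/\ln(16/7)\rceil$ works, and $m = 2r+1$ is exactly the value in the statement. I do not expect a genuine obstacle here once the two preceding lemmas are available; it is all bookkeeping. The only points needing a little care are keeping the ceiling in $k$ on the harmless side (a larger $k$ only shrinks $\var[S]$, preserving the hypothesis), and checking that $r \ge 1$ so that the estimate $\sqrt{\pi r} \ge \sqrt{\pi}$ used above is legitimate in the range of $\delta$ under consideration.
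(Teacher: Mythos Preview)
Your proposal is correct and follows exactly the approach the paper takes: the paper's entire proof is the single sentence ``Just set $\nu^2 = 1/8$ in the previous lemma to get a rough minimum for the bound,'' and you have carried out precisely that substitution with the arithmetic made explicit. The extra details you supply (verifying $\var[S] \le \epsilon^2\mu^2/8$ from the choice of $k$, and the bound $\sqrt{\pi r}\ge\sqrt{\pi}$ for $r\ge 1$) are the bookkeeping the paper leaves implicit.
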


\begin{proof}
   Just set $\nu^2 = 1/8$ in the previous lemma to get a rough 
   minimum for the bound.
\end{proof}

Now suppose that instead of bounding $\hat \mu_1/\mu  -1$, we wish
to bound $\xi = \mu/\hat \mu_1 - 1$.  The next lemma shows how to build
a biased estimate where $|\xi| \leq \epsilon$ from an estimate
with relative error at most $\epsilon$.
\begin{lemma}
  Suppose $\hat \mu_1$ is an estimate for $\mu$ with 
  $|\hat \mu_1/\mu - 1| \leq \epsilon$.  Then
  $\hat \mu_2 = \hat \mu_1/(1 - \epsilon^2)$ has 
  $|\mu/\hat \mu_2 - 1| \leq \epsilon$.
\end{lemma}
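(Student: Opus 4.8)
The statement to prove is purely algebraic: from the hypothesis $|\hat\mu_1/\mu - 1| \le \epsilon$ we want $|\mu/\hat\mu_2 - 1| \le \epsilon$ where $\hat\mu_2 = \hat\mu_1/(1-\epsilon^2)$. The plan is to rewrite $\mu/\hat\mu_2 - 1$ directly in terms of the ratio $\rho := \hat\mu_1/\mu$. Substituting the definition of $\hat\mu_2$ gives $\mu/\hat\mu_2 = (1-\epsilon^2)\mu/\hat\mu_1 = (1-\epsilon^2)/\rho$, so $\mu/\hat\mu_2 - 1 = [(1-\epsilon^2) - \rho]/\rho$. The hypothesis says $\rho \in [1-\epsilon, 1+\epsilon]$, and one checks easily that this interval lies in $(0,\infty)$ since $\epsilon < 1$, so the sign of the denominator is fixed and positive.

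The key step is then a one-variable optimization: maximize $|(1-\epsilon^2-\rho)/\rho|$ over $\rho \in [1-\epsilon, 1+\epsilon]$. Write $g(\rho) = (1-\epsilon^2)/\rho - 1$; this is strictly decreasing in $\rho$, so its extremes on the interval occur at the endpoints. At $\rho = 1-\epsilon$ we get $g(1-\epsilon) = (1-\epsilon^2)/(1-\epsilon) - 1 = (1+\epsilon) - 1 = \epsilon$. At $\rho = 1+\epsilon$ we get $g(1+\epsilon) = (1-\epsilon^2)/(1+\epsilon) - 1 = (1-\epsilon) - 1 = -\epsilon$. Hence $g(\rho) \in [-\epsilon, \epsilon]$ on the whole interval, which is exactly $|\mu/\hat\mu_2 - 1| \le \epsilon$.

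There is essentially no obstacle here — the only thing to be careful about is that the manipulations involve dividing by $\rho$ and by $\hat\mu_1$, so one should note at the outset that $\hat\mu_1 \ne 0$ (it follows from $\rho \ge 1-\epsilon > 0$, using $\mu \ne 0$, which is implicit since $\sigma^2/\mu^2$ is assumed finite). The factorizations $(1-\epsilon^2)/(1-\epsilon) = 1+\epsilon$ and $(1-\epsilon^2)/(1+\epsilon) = 1-\epsilon$ are what make the endpoint values come out to exactly $\pm\epsilon$, which is the whole point of the $1/(1-\epsilon^2)$ correction factor: it is precisely the scaling that symmetrizes the multiplicative error when we invert the ratio. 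An alternative phrasing avoiding calculus: for $\rho \in [1-\epsilon, 1+\epsilon]$ one has $\rho \le 1+\epsilon = (1-\epsilon^2)/(1-\epsilon) \le (1-\epsilon^2)/(1 - \epsilon) $ and $\rho \ge 1-\epsilon = (1-\epsilon^2)/(1+\epsilon)$, i.e. $1-\epsilon \le (1-\epsilon^2)/\rho \le 1+\epsilon$ after dividing through, which is the claim.
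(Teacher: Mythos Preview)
Your proof is correct and is precisely the ``simplifying the appropriate inequalities'' that the paper alludes to without writing out; the endpoint evaluations $(1-\epsilon^2)/(1\mp\epsilon)=1\pm\epsilon$ together with monotonicity of $\rho\mapsto(1-\epsilon^2)/\rho$ are exactly the intended computation. The only implicit assumption you rely on, $\epsilon<1$ (so that $\rho>0$ and $1-\epsilon^2>0$), is in force throughout the paper.
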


\begin{proof}
The proof follows from simplifying the appropriate inequalities.
\end{proof}

This is why in the first step of the algorithm, the estimate found
from median-of-means is 
divided by $1-\epsilon_1^2$ before moving to the next step.

\subsection{The second step of the algorithm}

To analyze this step, it helps to have
two new functions that upper and lower bound
$\Psi$.
\begin{align}
\Psi_U(x) = \ln(1 + x + x^2/2), \ 
\Psi_L(x) = -\ln(1-x+x^2/2).
\end{align}

\begin{lemma}
  For all $x \in \real$, 
  \[
  \Psi_L(x) \leq \Psi(x) \leq \Psi_U(x).
  \]
\end{lemma}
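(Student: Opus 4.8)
The plan is to split on the sign of $x$ and exploit that $\Psi$ coincides with $\Psi_U$ on $[0,\infty)$ and with $\Psi_L$ on $(-\infty,0]$, so that on each half-line one of the two desired inequalities is an equality and only the other needs an argument. First I would record that both quadratics appearing inside the logarithms are strictly positive for every real $x$, since $1+x+x^2/2 = \tfrac12\bigl((x+1)^2+1\bigr)$ and $1-x+x^2/2 = \tfrac12\bigl((x-1)^2+1\bigr)$. Hence $\Psi_U$, $\Psi_L$, and $\Psi$ are well-defined and finite on all of $\real$, and the two branches in the definition of $\Psi$ agree at $x=0$ (both give $0$).

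The key computation is that $\Psi_L \le \Psi_U$ everywhere. Indeed, for every $x$,
\[
\Psi_U(x)-\Psi_L(x)=\ln(1+x+x^2/2)+\ln(1-x+x^2/2)
=\ln\bigl((1+x^2/2)^2-x^2\bigr)=\ln(1+x^4/4)\ge 0.
\]

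It then remains only to assemble the bounds. For $x\ge 0$ we have $\Psi(x)=\Psi_U(x)$, so $\Psi(x)\le\Psi_U(x)$ holds with equality, while $\Psi_L(x)\le\Psi_U(x)=\Psi(x)$ follows from the displayed inequality. For $x\le 0$ we have $\Psi(x)=\Psi_L(x)$, so $\Psi_L(x)\le\Psi(x)$ holds with equality, while $\Psi(x)=\Psi_L(x)\le\Psi_U(x)$ again follows from the displayed inequality. Since every real $x$ is covered, the lemma follows.

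I do not expect any genuine obstacle here: the argument is elementary. The only points requiring a moment's care are the well-definedness of the logarithms (the positivity of the two quadratics, handled by completing the square) and the consistency of the two branches of $\Psi$ at $x=0$; both are immediate.
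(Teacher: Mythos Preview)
Your proof is correct and follows essentially the same approach as the paper: both rely on the identity $(1+x+x^2/2)(1-x+x^2/2)=1+x^4/4\ge 1$ together with the observation that $\Psi$ agrees with $\Psi_U$ on $[0,\infty)$ and with $\Psi_L$ on $(-\infty,0]$. The only cosmetic difference is that you package the key computation as the single global inequality $\Psi_L\le\Psi_U$ and then read off both bounds, whereas the paper handles $\Psi_L\le\Psi$ and $\Psi\le\Psi_U$ separately (with a ``similarly'' for the second); your added remark on positivity of the two quadratics is a nice bit of care that the paper leaves implicit.
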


\begin{proof}
  First consider $\Psi_L(x) \leq \Psi(x)$.  These are equal when $x \leq 0$, so suppose $x \geq 0$.  Exponentiating gives
  \begin{align*}
    \Psi_L(x) \leq \Psi(x) &\Leftrightarrow [1-x+x^2/2]^{-1} \leq 1 + x + x^2/2. \\
     &\Leftrightarrow 1 \leq 1 + x^4/4,
  \end{align*}
  therefore the inequality holds.  The other inequality is shown similarly.
\end{proof}

Now set
\begin{align*}
  W_{L,i} &= \hat \mu_1 + \alpha^{-1}\Psi_L(\alpha \cdot 
    (X_i - \hat \mu_1)), \\
  W_{U,i} &= \hat \mu_1 + \alpha^{-1}\Psi_U(\alpha \cdot 
    (X_i - \hat \mu_1)).
\end{align*}
By the previous lemma, $W_{L,i} \leq W_i \leq W_{U,i}$ for all $i$.


\begin{lemma}
  Denote 
  $\bar W_U = (W_{U,1} + \cdots + W_{U,n})/n$.  Then
  \[
  \prob(\bar W > \mu(1+\epsilon)) \leq 
    \exp[-(n\epsilon^2/(2c^2))\cdot(1 - \xi^2(1 + 1/c^2)))].
  \]
\end{lemma}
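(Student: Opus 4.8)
The plan is a textbook Chernoff (exponential Markov) bound, exploiting the fact that $\Psi_U$ was chosen precisely so that $\exp(\Psi_U(x)) = 1 + x + x^2/2$ is a quadratic. Since the earlier lemma gives $W_i \le W_{U,i}$ for every $i$, the sample mean $\bar W = (W_1+\cdots+W_n)/n$ satisfies $\bar W \le \bar W_U$, so it suffices to bound $\prob(\bar W_U > \mu(1+\epsilon))$. Throughout I would treat $\hat\mu_1$, and hence $\alpha = \epsilon/[c^2\hat\mu_1]$ and $\xi = \mu/\hat\mu_1 - 1$, as fixed, since these depend only on the first-step samples, which are independent of $X_1,\dots,X_n$; I also use $\mu > 0$ and $\hat\mu_1 > 0$, so that $\alpha > 0$.

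First I would compute the moment generating function of a single centered variable $W_{U,i} - \hat\mu_1$ at the point $\alpha$. By definition $\alpha(W_{U,i} - \hat\mu_1) = \Psi_U(\alpha(X_i - \hat\mu_1)) = \ln(1 + \alpha(X_i-\hat\mu_1) + \tfrac12\alpha^2(X_i-\hat\mu_1)^2)$, so exponentiating and taking expectations, together with $\mean[(X_i-\hat\mu_1)^2] = \sigma^2 + (\mu-\hat\mu_1)^2$, gives
\[
\mean\!\left[e^{\alpha(W_{U,i}-\hat\mu_1)}\right] = 1 + \alpha(\mu - \hat\mu_1) + \tfrac12\alpha^2\big(\sigma^2 + (\mu-\hat\mu_1)^2\big) \le \exp\!\Big(\alpha(\mu-\hat\mu_1) + \tfrac12\alpha^2\big(\sigma^2+(\mu-\hat\mu_1)^2\big)\Big),
\]
using $1 + t \le e^t$. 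Next, Markov's inequality applied to $\exp\big(\alpha\sum_i(W_{U,i}-\hat\mu_1)\big)$, together with independence of the $X_i$, yields
\[
\prob(\bar W_U > \mu(1+\epsilon)) \le e^{-\alpha n(\mu(1+\epsilon) - \hat\mu_1)}\prod_{i=1}^n \mean\!\left[e^{\alpha(W_{U,i}-\hat\mu_1)}\right] \le \exp\!\Big(n\big[-\alpha\mu\epsilon + \tfrac12\alpha^2\big(\sigma^2 + (\mu-\hat\mu_1)^2\big)\big]\Big),
\]
where the $\alpha(\mu-\hat\mu_1)$ contributions cancel.

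Finally I would substitute $\alpha = \epsilon/[c^2\hat\mu_1]$, write $\mu = \hat\mu_1(1+\xi)$ and $\mu-\hat\mu_1 = \hat\mu_1\xi$, and use the hypothesis $\sigma^2 \le c^2\mu^2 = c^2\hat\mu_1^2(1+\xi)^2$; the exponent then reduces to $\tfrac{n\epsilon^2}{c^2}\big[-(1+\xi) + \tfrac12(1+\xi)^2 + \tfrac{\xi^2}{2c^2}\big]$, and the bracket simplifies to $\tfrac12(\xi^2-1) + \tfrac{\xi^2}{2c^2} = -\tfrac12\big(1 - \xi^2(1+1/c^2)\big)$, which is the claimed bound.

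The argument is largely routine; the one point requiring care is the bookkeeping in this last substitution — in particular spotting the cancellation that turns $-2(1+\xi)+(1+\xi)^2$ into $\xi^2-1$ — together with checking that $\alpha>0$ so that the exponential Markov step is valid in the intended direction, which is where positivity of $\mu$ and of the first-step estimate $\hat\mu_1$ enters.
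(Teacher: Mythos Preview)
Your proposal is correct and follows essentially the same Chernoff-style argument as the paper: exponentiate, use that $e^{\Psi_U(x)} = 1 + x + x^2/2$ to compute the moment generating function exactly, apply $1+t \le e^t$, and then substitute $\alpha = \epsilon/(c^2\hat\mu_1)$ and $\mu = \hat\mu_1(1+\xi)$ to simplify the exponent. Your explicit mention of the reduction $\bar W \le \bar W_U$ and of the conditioning on the first-step samples (so that $\hat\mu_1$, $\alpha$, $\xi$ are treated as constants) is a helpful clarification that the paper leaves implicit.
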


\begin{proof}
  Take a Chernoff bound~\cite{chernoff1952} style approach.  Since $\alpha > 0$ and 
  $\exp$ is a strictly increasing function,
  \begin{align*}
  \prob(\bar W_U > \mu(1+\epsilon) &= \prob(W_{U,1} + \cdots + W_{U,n} > n \mu(1+\epsilon)) \\     
  &=  \prob(\exp(\alpha(W_{U,1} + \cdots + W_{U,n})) > \exp( \alpha n \mu(1 + \epsilon))) \\
  &\leq \mean[\exp(\alpha(W_{U,1} + \cdots + W_{U,n}))] / \exp(\alpha n \mu(1 + \epsilon)) \\
  &= \left[\frac{\mean[\exp(\alpha W_{U,1})]}{\exp(\alpha \mu(1 + \epsilon)}\right]^n
  \end{align*}
  First consider the expression inside the mean
  in the numerator.  Setting $\gamma = \mu - \hat\mu_1$ gives
  \begin{align*}
  \exp(\alpha W_{U,1}) &= \exp(\alpha \hat \mu_1 + \ln(1 + \alpha(X - \hat \mu_1)
     + (\alpha^2/2)(X - \hat \mu_1)^2) \\
     &= \exp(\alpha \hat \mu_1)[1 + \alpha(X - \hat \mu_1) + (\alpha^2/2)(X - \hat \mu_1)^2] \\
     &= \exp(\alpha \hat \mu_1)[1 + \alpha(X - \mu + \gamma) + (\alpha^2/2)(X - \mu + \gamma)^2] \\
     &= \exp(\alpha \hat \mu_1)[1 + \alpha(X - \mu) + \alpha \gamma + (\alpha^2/2)((X - \mu)^2 + 2(X-\mu)\gamma + \gamma^2)] 
  \end{align*}
  Since $\mean[X - \mu] = 0$ and $\mean[(X-\mu)^2] = \sigma^2$, we have
  \[
  \prob(\bar W_U > \mu(1 + \epsilon) \leq 
    \left[\frac{\exp(\alpha \hat \mu_1)[1 + \alpha \gamma + 
      \alpha^2\gamma^2/2 + \alpha^2 \sigma^2/2]}{\exp(\alpha\mu(1+\epsilon))}
      \right]^n
  \]
  
  Note 
  \[
  \frac{\exp(\alpha \hat \mu_1)}{\exp(\alpha\mu(1+\epsilon))}
    = \exp(-\alpha(\mu - \hat \mu_1) - \alpha \epsilon \mu)
    = \exp(-\alpha \gamma - \alpha \epsilon \mu).
    \]

Next use  $1 + x \leq \exp(x)$ to state
  \[
  \prob(\bar W_U > \mu(1 + \epsilon) \leq 
    \exp(-\alpha\gamma-\alpha\epsilon\mu+\alpha\gamma
      + \alpha^2\gamma^2/2 + \alpha^2\sigma^2/2)^n
  \]

Since $\alpha = \epsilon/[c^2\hat\mu_1^2]$, 
$\gamma = \mu - \hat \mu_1$ and $\xi = \mu/\hat\mu_1 - 1$,
\[
\frac{\alpha^2\gamma^2}{2} = \frac{(\mu - \hat\mu_1)^2}{2} \cdot
  \frac{\epsilon^2}{c^4 \hat\mu_1^2} = \left(\frac{\mu}{\hat\mu_1}-1\right)^2
  \cdot \frac{\epsilon^2}{2c^4} = \frac{\xi^2 \epsilon^2}{2 c^4}.
\]

Similarly, using $\sigma^2/\mu^2 \leq c^2$, 
\[
-\alpha\epsilon\mu + \frac{\alpha^2 \sigma^2}{2}
\leq - \alpha\epsilon\mu + \frac{\alpha^2 \mu^2 c^2}{2} 
 = - \frac{\epsilon^2}{c^2}\left[\frac{\mu}{\hat \mu_1} - \frac{1}{2}
 \left(\frac{\mu}{\hat \mu_1}\right)^2\right]
 = - \frac{\epsilon^2}{2c^2}\left[2(1+\xi)-(1+\xi)^2\right].
\]
Note $2(1+\xi)-(1+\xi)^2 = 1 - \xi^2$.

Putting this together with the $\alpha^2\gamma^2/2$ term gives
\begin{align*}
  \prob(\bar W_U > \mu(1 + \epsilon) &\leq 
    \exp\left(-\frac{\epsilon^2 n}{2c^2}
       \left(1 - \xi^2(1+1/c^2) \right)\right).
\end{align*}

\end{proof}

Note that at the end of Step 1 of the algorithm,
$\xi^2 \leq \epsilon(c^2/(1+c^2))$, 
which means $\xi^2(1 + 1/c^2) \leq \epsilon$, and 
$\prob(\bar W_U > \mu(1+\epsilon)) \leq 
  \exp(-n(1-\epsilon)\epsilon^2/[2c^2]).$

\begin{lemma}
  Denote 
  $\bar W_L = (W_{L,1} + \cdots + W_{L,n})/n$.  Then
  \[
  \prob(\bar W_L < \mu(1-\epsilon)) \leq 
    \exp(-(n\epsilon^2/(2c^2))\cdot
    (1 - \xi^2(1+1/c^2))).
  \]
\end{lemma}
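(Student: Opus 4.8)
The plan is to mirror the proof of the preceding lemma for the upper tail, reversing the direction of the Chernoff bound: since we now want a \emph{lower} deviation and $\alpha>0$, we exponentiate with $-\alpha$. Because $\exp$ is strictly increasing and the $W_{L,i}$ are iid, Markov's inequality gives
\begin{align*}
  \prob(\bar W_L < \mu(1-\epsilon))
  &= \prob\!\left(\exp\!\left(-\alpha(W_{L,1}+\cdots+W_{L,n})\right) > \exp\!\left(-\alpha n \mu(1-\epsilon)\right)\right) \\
  &\leq \left[\frac{\mean[\exp(-\alpha W_{L,1})]}{\exp(-\alpha\mu(1-\epsilon))}\right]^{n}.
\end{align*}

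Next I would expand the numerator exactly as in the upper-tail case. Since $\Psi_L(x) = -\ln(1-x+x^2/2)$ (and the quadratic $1-x+x^2/2$ is strictly positive for all $x$, having discriminant $-1$, so $\Psi_L$ is well defined and the logarithm below is legitimate), we have $-\alpha W_{L,1} = -\alpha\hat\mu_1 + \ln(1 - \alpha(X-\hat\mu_1) + (\alpha^2/2)(X-\hat\mu_1)^2)$, so the logarithm cancels the outer exponential:
\[
  \exp(-\alpha W_{L,1}) = \exp(-\alpha\hat\mu_1)\left[1 - \alpha(X-\hat\mu_1) + \tfrac{\alpha^2}{2}(X-\hat\mu_1)^2\right].
\]
Writing $\gamma = \mu - \hat\mu_1$, substituting $X - \hat\mu_1 = (X-\mu)+\gamma$, and using $\mean[X-\mu]=0$, $\mean[(X-\mu)^2]=\sigma^2$, the expectation becomes $\exp(-\alpha\hat\mu_1)\left[1 - \alpha\gamma + \alpha^2\gamma^2/2 + \alpha^2\sigma^2/2\right]$; meanwhile $\exp(-\alpha\hat\mu_1)/\exp(-\alpha\mu(1-\epsilon)) = \exp(\alpha\gamma - \alpha\epsilon\mu)$. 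Applying $1+x\leq\exp(x)$ to the bracket, the $+\alpha\gamma$ from the prefactor cancels the $-\alpha\gamma$ from the bracket — the same cancellation as in the upper-tail lemma — leaving
\[
  \prob(\bar W_L < \mu(1-\epsilon)) \leq \exp\!\left(n\left(-\alpha\epsilon\mu + \frac{\alpha^2\gamma^2}{2} + \frac{\alpha^2\sigma^2}{2}\right)\right).
\]

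The exponent here is \emph{identical} to the one already bounded in the proof of the previous lemma, so I would simply quote that computation: $\alpha^2\gamma^2/2 = \xi^2\epsilon^2/(2c^4)$ and, using $\sigma^2/\mu^2\leq c^2$, $-\alpha\epsilon\mu + \alpha^2\sigma^2/2 \leq -\frac{\epsilon^2}{2c^2}(1-\xi^2)$, which together give the claimed bound $\exp(-(n\epsilon^2/(2c^2))(1 - \xi^2(1+1/c^2)))$. I do not expect any genuine obstacle: the only thing to check carefully is the sign bookkeeping — that a lower tail calls for the $-\alpha$ transform, and that the definition of $\Psi_L$ (with its leading $-x$ term) is precisely what makes the first-moment contribution appear as $-\alpha\gamma$, so that the cancellation with the prefactor goes through just as before.
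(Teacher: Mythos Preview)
Your proposal is correct and follows essentially the same route as the paper: multiply by $-\alpha$, exponentiate, apply Markov's inequality, expand $\exp(-\alpha W_{L,1})$ via the definition of $\Psi_L$, take expectations, use $1+x\le e^x$ so that the $\pm\alpha\gamma$ terms cancel, and observe that the resulting exponent $-\alpha\epsilon\mu+\alpha^2\gamma^2/2+\alpha^2\sigma^2/2$ is identical to the one already bounded in the upper-tail lemma. The paper's proof is exactly this, only more tersely stated.
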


\begin{proof}
    The proof is similar to the previous lemma:  first multiply
    by $-\alpha$ and exponentiate to get
  \begin{align*}
  \prob(\bar W_L > \mu(1-\epsilon) 
  &= \left[\mean[\exp(-\alpha W_{U,1})]\exp(\alpha \mu(1-\epsilon)\right]^n\\
  &= [\exp(\alpha\gamma-\alpha\epsilon\mu)(1-\alpha\gamma+\alpha^2\gamma^2/2+\alpha^2\sigma^2/2)]^n \\
  &\leq \exp(-\alpha\epsilon\mu+\alpha^2\gamma^2/2+\alpha^2\sigma^2/2)^n,
  \end{align*}
  and the rest of the proof is the same as the previous lemma.
\end{proof}

Putting these results together gives the following.
\begin{lemma}
  For $n \geq 2 c^2 \epsilon^{-2} \ln(2/\delta) 
        (1 - \epsilon)^{-1}$ and 
        $|\hat \mu_1 - \mu| \leq \sqrt{\epsilon c^2/(1+c^2)} \hat \mu_1$,
  \[
  \prob(|\bar W - \mu| > \epsilon \mu) \leq \delta.
  \]
\end{lemma}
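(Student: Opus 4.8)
The plan is to combine the two preceding one-sided tail lemmas with the pointwise sandwich $W_{L,i} \le W_i \le W_{U,i}$ established above, and then match the resulting bound against the hypothesis on $n$. First I would unpack what the assumption on $\hat\mu_1$ gives us. Writing $\xi = \mu/\hat\mu_1 - 1$ as before, the hypothesis $|\hat\mu_1 - \mu| \le \sqrt{\epsilon c^2/(1+c^2)}\,\hat\mu_1$ is precisely $|\xi| \le \sqrt{\epsilon c^2/(1+c^2)}$, hence $\xi^2(1 + 1/c^2) = \xi^2(1+c^2)/c^2 \le \epsilon$. (Here I treat $\mu$, and therefore $\hat\mu_1$, as positive; the case $\mu < 0$ follows by applying the whole argument to $-X$.)

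Next, averaging $W_{L,i} \le W_i \le W_{U,i}$ over $i = 1, \ldots, n$ gives $\bar W_L \le \bar W \le \bar W_U$ deterministically. Consequently the event $\{|\bar W - \mu| > \epsilon\mu\}$ is contained in $\{\bar W > \mu(1+\epsilon)\} \cup \{\bar W < \mu(1-\epsilon)\}$, which is in turn contained in $\{\bar W_U > \mu(1+\epsilon)\} \cup \{\bar W_L < \mu(1-\epsilon)\}$. A union bound together with the two previous lemmas yields
\[
\prob(|\bar W - \mu| > \epsilon\mu) \le 2\exp\!\left(-\frac{n\epsilon^2}{2c^2}\bigl(1 - \xi^2(1 + 1/c^2)\bigr)\right).
\]

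Finally I would substitute $\xi^2(1 + 1/c^2) \le \epsilon$ from the first step; since $\epsilon < 1$ keeps $1 - \epsilon > 0$, the exponent bound is monotone in the right direction, and we obtain the cleaner estimate $2\exp(-n\epsilon^2(1-\epsilon)/(2c^2))$. It then remains to check that the assumed lower bound $n \ge 2c^2\epsilon^{-2}\ln(2/\delta)(1-\epsilon)^{-1}$ forces $n\epsilon^2(1-\epsilon)/(2c^2) \ge \ln(2/\delta)$, equivalently $2\exp(-n\epsilon^2(1-\epsilon)/(2c^2)) \le \delta$, which closes the argument.

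There is no real obstacle here: the proof is essentially bookkeeping. The only points needing a little care are the translation of the hypothesis on $\hat\mu_1$ into the form $\xi^2(1+1/c^2) \le \epsilon$ that the tail lemmas consume, and the observation that the $\Psi_L \le \Psi \le \Psi_U$ sandwich is exactly what lets a single Chernoff-type computation (done once for $\Psi_U$ and once for $\Psi_L$) control both tails of $\bar W$ at once, without any higher-moment input on $X$.
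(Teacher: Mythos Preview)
Your proof is correct and follows exactly the same route as the paper: translate the hypothesis on $\hat\mu_1$ into $\xi^2(1+1/c^2)\le\epsilon$, sandwich $\bar W$ between $\bar W_L$ and $\bar W_U$, apply the two one-sided tail lemmas with a union bound, and solve for $n$. The paper's own proof is the one-line ``apply the previous lemma using $\xi^2\le\epsilon c^2/(1+c^2)$,'' which is precisely what you have spelled out in detail.
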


\begin{proof}
  Apply the previous lemma using
  $\xi^2 \leq \epsilon c^2/(1+c^2).$
\end{proof}

Theorem~\ref{THM:main} immediately follows.

\section{Lower bound on the number of samples}
\label{SEC:lower}

Begin with a rephrasing of Proposition 6.1 from~\cite{catoni2012}.
\begin{lemma}
  Let $\hat \mu:\real^n \rightarrow \real$ be any estimator of the mean of
  $n$ iid random variables.  
  Let $Y_1,\ldots,Y_n \sim \normdist(\mu,\sigma^2)$ and $\bar Y = (Y_1 + \cdots + Y_n)/n$.  Then it holds that either
  $\prob(\hat \mu \geq \mu(1+\epsilon)) \geq \prob(\bar Y \geq \mu(1+\epsilon))$
  or
  $\prob(\hat \mu \leq \mu(1-\epsilon)) \geq \prob(\bar Y \leq \mu(1-\epsilon))$
  for $\bar Y = (Y_1 + \cdots Y_n)/n$.
\end{lemma}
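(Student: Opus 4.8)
The natural route is a two-point Neyman--Pearson argument, using that for a Gaussian location family the sample mean $\bar Y$ is a sufficient statistic and every likelihood-ratio test is a one-sided threshold on $\bar Y$, so that no estimator can produce a test with smaller total error probability than $\bar Y$ itself does.

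First I would introduce two competing product measures $P_- = \normdist(\mu-h,\sigma^2)^{\otimes n}$ and $P_+ = \normdist(\mu+h,\sigma^2)^{\otimes n}$ whose means straddle $\mu$, with the shift $h$ chosen just large enough that overshooting $\mu$ forces a relative-error failure under $P_-$ and undershooting $\mu$ forces one under $P_+$. For $y\in\real^n$ the log-likelihood ratio $\ln[(dP_+/dP_-)(y)]$ is affine and strictly increasing in the sample average, so $\{\bar y\ge\mu\}$ is exactly a likelihood-ratio acceptance region for $P_+$ against $P_-$ at the midpoint threshold $\mu$. A one-line normal-CDF computation (translate an $\normdist(\,\cdot\,,\sigma^2/n)$ variable by $h$) then identifies $P_-(\bar Y\ge\mu)$ and $P_+(\bar Y\le\mu)$ with the normal tail $\prob(\bar Y\ge\mu(1+\epsilon))$ appearing in the statement.

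Next, for the given $\hat\mu$ I would feed the Neyman--Pearson lemma the test that ``declares $P_+$'' exactly on $A=\{\hat\mu\ge\mu\}$. Since the likelihood-ratio test minimizes $P_-(A)+P_+(A^{c})$ over all tests, $\prob_{P_-}(\hat\mu\ge\mu)+\prob_{P_+}(\hat\mu<\mu)\ \ge\ P_-(\bar Y\ge\mu)+P_+(\bar Y<\mu)=2\,\prob(\bar Y\ge\mu(1+\epsilon))$, so at least one of the two summands is at least $\prob(\bar Y\ge\mu(1+\epsilon))$. It then remains to convert each summand --- which concerns $\hat\mu$ relative to the midpoint $\mu$ under a shifted Gaussian --- into the multiplicative-window form in the statement, using that overshooting the midpoint is at least as rare as overshooting the nearer window endpoint, and translating the normal benchmark accordingly; this produces the stated dichotomy between $\prob(\hat\mu\ge\mu(1+\epsilon))$ and $\prob(\hat\mu\le\mu(1-\epsilon))$.

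I expect the one genuine obstacle to be this last reduction: choosing $h$ and arranging the set inclusions and translations so that every inequality points the right way while the normal tail on the right-hand side is not weakened --- this is the tightness that later propagates into the sharp leading factor $2$ and the precise $\ln(\cdot)$ correction in Theorem~\ref{THM:main2}. The Neyman--Pearson step and the Gaussian tail identity themselves are routine.
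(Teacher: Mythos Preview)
The paper does not prove this lemma; it simply introduces it as ``a rephrasing of Proposition~6.1 from~\cite{catoni2012}'' and moves on. Your two-point Neyman--Pearson argument with the shifted Gaussians $P_\pm=\normdist(\mu\mp h,\sigma^2)^{\otimes n}$ is exactly Catoni's proof, so you have reconstructed what the paper outsources to the reference.

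Two remarks on your sketch. First, taken literally for a single fixed $\mu$ the lemma is false (the constant estimator $\hat\mu\equiv\mu$ makes both tails zero); it must be read as ``there exists a Gaussian mean for which one of the two inequalities holds,'' which is precisely what your $P_-/P_+$ construction delivers and precisely what Theorem~\ref{THM:main2} actually needs. Your argument therefore clarifies the paper's somewhat loose statement.

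Second, the conversion you flag as the obstacle is handled by choosing the shift large enough, e.g.\ $h=\epsilon\mu/(1-\epsilon)$: then the midpoint $\mu$ equals $(\mu+h)(1-\epsilon)$ and exceeds $(\mu-h)(1+\epsilon)$, so both events $\{\hat\mu\ge\mu\}$ under $P_-$ and $\{\hat\mu<\mu\}$ under $P_+$ are contained in the respective relative-error failure sets, and every inclusion points the right way. The benchmark tail becomes $\prob\bigl(Z\ge h\sqrt n/\sigma\bigr)$ with $h$ differing from $\epsilon\mu$ only by the factor $1/(1-\epsilon)$; together with the $O(\epsilon)$ adjustment in $\sigma/\mu_\pm$ needed to keep both shifted laws inside the class $\sigma/|\mu|\le c$, this is absorbed into the lower-order terms of Theorem~\ref{THM:main2} and does not touch the leading constant~$2$.
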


In other words, for any estimator of the mean for normal random variables,
there is either a higher chance that the estimate is in the upper tail than
for the sample average, or there is a higher chance that the estimate
falls in the lower tail than the sample average does.
Note that for $Y_i \sim \normdist(\mu,c^2\mu^2)$, then 
$\bar Y \sim \normdist(\mu,c^2\mu^2/n)$.  Let $Z \sim \normdist(0,1)$.
From the scaling properties of normal random variables,
\[
\prob(\bar Y \in [\mu(1-\epsilon),\mu(1+\epsilon)]) = 
  \prob(Z \in [-\epsilon \sqrt{n}/c,\epsilon \sqrt{n}/c]).
\]
Since $\prob(Z \leq -a) = \prob(Z \geq a)$ for all $a$, we need only bound one
tail of the normal.

\begin{lemma}
  Let $Z \sim \normdist(0,1)$ and $a_{\delta}$ satisfy
  $\prob(Z \geq a_{\delta}) = \delta$ where $\delta \leq 1/\sqrt{2\pi}$.  Then
  \[
  a_{\delta}^2 \geq 2 \ln\left(\frac{1}{\sqrt{2\pi}\delta}\right) + 2\ln\left(\frac{\sqrt{2\ln(1/[\sqrt{2\pi}\delta])}}{2\ln(1/[\sqrt{2\pi}\delta])+1}
  \right).
  \]
\end{lemma}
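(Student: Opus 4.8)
The plan is to derive a lower bound on the tail quantile $a_\delta$ of the standard normal by starting from the standard Gaussian tail upper bound and inverting it. Recall the classical estimate
\[
\prob(Z \geq a) = \frac{1}{\sqrt{2\pi}}\int_a^\infty e^{-t^2/2}\,dt \leq \frac{1}{\sqrt{2\pi}}\cdot\frac{1}{a}\,e^{-a^2/2}
\]
valid for $a > 0$, which follows by bounding $t/a \geq 1$ in the integrand (or integrating by parts). Setting $a = a_\delta$ and using $\prob(Z \geq a_\delta) = \delta$, this gives
\[
\delta \leq \frac{1}{\sqrt{2\pi}\,a_\delta}\,e^{-a_\delta^2/2},
\]
hence, taking logarithms and rearranging,
\[
a_\delta^2 \geq 2\ln\!\left(\frac{1}{\sqrt{2\pi}\,\delta}\right) - 2\ln a_\delta.
\]

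The remaining task is to control the $-2\ln a_\delta$ term. First I would establish a crude preliminary lower bound to get a handle on $a_\delta$: dropping the $-2\ln a_\delta$ term is not immediately legitimate since it is negative when $a_\delta > 1$, so instead I would note that the same tail inequality with the weaker constant (e.g. using $\prob(Z\ge a)\le \frac12 e^{-a^2/2}$, or simply that $\delta \le 1/\sqrt{2\pi}$ forces $a_\delta \ge 0$ and a short argument gives $a_\delta^2 \ge 2\ln(1/[\sqrt{2\pi}\delta])$ directly from a slightly different tail bound) yields $a_\delta^2 \geq 2\ln(1/[\sqrt{2\pi}\delta]) =: 2L$. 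Plugging $a_\delta \geq \sqrt{2L}$ back into $-2\ln a_\delta$ — which is a decreasing function of $a_\delta$, so the substitution preserves the inequality in the right direction — gives
\[
a_\delta^2 \geq 2L - 2\ln\sqrt{2L} = 2\ln\!\left(\frac{1}{\sqrt{2\pi}\delta}\right) - \ln\!\left(2\ln\frac{1}{\sqrt{2\pi}\delta}\right),
\]
which I would then massage into exactly the stated form $2\ln(1/[\sqrt{2\pi}\delta]) + 2\ln\bigl(\sqrt{2\ln(1/[\sqrt{2\pi}\delta])}/(2\ln(1/[\sqrt{2\pi}\delta])+1)\bigr)$ by a routine logarithm manipulation — the $+1$ in the denominator suggests the cleaner route is to substitute a bound of the form $a_\delta^2 \le 2L+1$ or $a_\delta \le \sqrt{2L}+\text{something}$ into the $-2\ln a_\delta$ term rather than $a_\delta \ge \sqrt{2L}$; I would check both directions and pick whichever produces the exact constant.

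The main obstacle is bookkeeping rather than conceptual: one must be careful that each substitution into the $\ln a_\delta$ term moves the inequality the correct way (monotonicity of $\ln$ and of $-2\ln(\cdot)$), and one must produce the precise algebraic form with the $2\ln(1/[\sqrt{2\pi}\delta])+1$ denominator, which forces a particular choice among the several nearly-equivalent Gaussian tail estimates and self-substitution steps. I would also verify the hypothesis $\delta \leq 1/\sqrt{2\pi}$ is exactly what is needed to guarantee $a_\delta \geq 1$ (so that logarithms of the quantities appearing are well-behaved and the preliminary bound $a_\delta^2 \geq 2L$ is nonnegative), and confirm $a_\delta$ is well-defined by continuity and strict monotonicity of the Gaussian tail. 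Once the exact form is matched, the proof is complete, and combined with the preceding lemmas and the identity $\prob(\bar Y \in [\mu(1-\epsilon),\mu(1+\epsilon)]) = \prob(|Z| \le \epsilon\sqrt{n}/c)$ this yields Theorem~\ref{THM:main2} by setting $\epsilon\sqrt{n}/c = a_{\delta}$.
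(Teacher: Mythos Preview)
Your monotonicity step is backwards. From the upper Mill's bound you correctly reach $a_\delta^2 \geq 2L - 2\ln a_\delta$ with $L:=\ln(1/[\sqrt{2\pi}\delta])$, but since $-2\ln(\cdot)$ is decreasing, a \emph{lower} bound $a_\delta \geq \sqrt{2L}$ only gives $-2\ln a_\delta \leq -2\ln\sqrt{2L}$, which shrinks the right-hand side and does not yield $a_\delta^2 \geq 2L-\ln(2L)$. What is needed is an \emph{upper} bound on $a_\delta$; your later aside about trying $a_\delta^2\leq 2L+1$ is the right instinct, but you present it as a cosmetic choice for matching constants rather than as the fix for a logical gap. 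The paper sidesteps this by starting instead from Gordon's \emph{lower} tail bound
\[
\delta=\prob(Z\geq a_\delta)\ \geq\ \frac{a_\delta}{a_\delta^2+1}\,\frac{1}{\sqrt{2\pi}}\,e^{-a_\delta^2/2},
\]
which after logarithms gives $a_\delta^2 \geq 2L + 2\ln\bigl(a_\delta/(a_\delta^2+1)\bigr)$. Because $a/(a^2+1)$ is decreasing, substituting the upper bound $a_\delta\leq a_1:=\sqrt{2L}$ produces exactly $a_\delta^2 \geq 2L + 2\ln\bigl(\sqrt{2L}/(2L+1)\bigr)$; the $2L+1$ denominator in the statement comes straight from Gordon's factor and will not emerge naturally from your upper-tail inequality.

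If you repair your argument by inserting the correct upper bound $a_\delta\leq\sqrt{2L}$, you in fact get the slightly stronger conclusion $a_\delta^2\geq 2L-\ln(2L)$, which implies the lemma since $1/\sqrt{2L}\geq \sqrt{2L}/(2L+1)$; but it will not ``massage into exactly the stated form'' by routine algebra as you claim. One minor correction: the hypothesis $\delta\leq 1/\sqrt{2\pi}$ does not force $a_\delta\geq 1$ (for $\delta=0.3$ one has $a_\delta\approx 0.52$); its actual role is to make $L\geq 0$ so that $a_1=\sqrt{2L}$ is real.
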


\begin{proof}
  Gordon~\cite{gordon1941} showed that  for $a \geq 0$,
  \[
  \prob(Z \geq a) \geq \frac{a}{a^2+1}\frac{1}{\sqrt{2\pi}}\exp(-a^2/2).
  \]
  Without the $a/(a^2+1)$ factor, the right hand side equals $\delta/2$ when
  $a_1 = \sqrt{2\ln(1/(\sqrt{2 \pi}\delta))}$.  Since 
  $a/(a^2+1) \leq 1$ we have $a_{\delta} \leq a_1$.  
  Also, $a/(a^2+1)$ is a decreasing function, so
  \[
  \frac{a_1}{a_1^2+1}\frac{1}{\sqrt{2\pi}} \exp(-a_{\delta}^2/2) \leq \delta/2.
  \]
  Solving gives
  \[
  a_{\delta}^2 \geq 2\ln\left(\frac{a_1}{a_1^2+1}\frac{1}{\sqrt{2\pi}\delta} \right)
  \]
  as desired.
\end{proof}

Putting $a_{\delta} = \epsilon \sqrt{n}/c$ then gives Theorem~\ref{THM:main2}.

\section{Applications}
\label{SEC:applications}

Jerrum, Valiant, and Vazirani~\cite{jerrumvv1986} showed
that for a large class of {\em self-reducible} problems,
the ability to sample from a density
in polynomial time leads to an $(\epsilon,\delta)$-randomized 
approximation scheme for the normalizing constant of the 
unnormalized density.  Since finding that normalizing constant
is often a \#P-complete problem, this has been used in many
settings.  Each of these leads to a problem such as that 
considered here where a random variable has mean $\mu$ 
equal to the target with bounded relative standard deviation.  This
method was expanded to more examples later by Jerrum and 
Sinclair~\cite{jerrums1996}.

The idea is as follows.  Suppose that the goal is to 
find $\#A_0$ which is the size of a set (either number of elements for
a finite set or the Lebesgue measure for $A_0 \subset \real^n$.)
Suppose that we can find a sequence of decreasing sets 
$A_0 \supseteq A_1 \supseteq A_2 \supseteq \cdots \supseteq A_k$ where
$\#A_k$ is known.  If each of the sets $A_i$ represents an instance
of the original problem (perhaps with a different input), the problem
is self-reducible.  If there is an efficient method for generating samples
uniformly from the $A_i$, then for each $i \in \{0,1\ldots,k-1\}$, let
$X_{i,1},\ldots,X_{i,m} \sim \unifdist(A_i)$, and let 
$R_i = m^{-1}\sum_{j} \ind(X_{i,j} \in A_{i+1})$ be the percentage
of values that fall into $A_{i+1}$.  Then 
\[
\frac{\#A_k}{\#A_0} = \mean[R_0] \mean[R_1]\cdots \mean[R_{k-1}],
\]
so let $\hat r = R_0 \cdots R_{k-1}$ be the unbiased {\em product estimator}
for $\#A_k/\#A_0$.

Then 
\[
\frac{\var(\hat r)}{\mean[\hat r]^2} =
  \frac{\mean[\hat r^2]}{\mean[\hat r]^2} - 1 = 
  \left[\prod_{i=1}^k \frac{\mean[R_i^2]}{\mean[R_i]^2}\right] - 1 =
  \left[\prod_{i=1}^k \left(1 + \frac{\var[R_i]}{\mean[R_i]^2}\right)\right] - 1 
\]
Let $r_i = \#A_i/\#A_{i+1}$.  Then $X_{i,1}$ 
has a Bernoulli distribution with mean $r_i$ and variance $r_i(1-r_i)$.
As the sample average of $m$ iid draws from $X_{i,1}$, 
$R_i$ has mean $r_i$ and variance $r_i(1-r_i)/m$.  
Then
\[
\frac{\var(\hat r)}{\mean[\hat r]^2} \leq \left[\prod_{i=1}^k 1 + \frac{1-r_i}{m r_i}\right] - 1,
\]
so if $r_i \geq 1/M$ for all $i$, using $1 + x \leq \exp(x)$ gives
\[
\frac{\var(\hat r)}{\mean[\hat r]^2} \leq \exp\left(\frac{k (M-1)}{m}\right)-1.
\]

There are $k$ different $R_i$ each requiring $m$ samples, therefore $km$ are
needed to generate one value of $\hat r$.  From the above the 
variance is $\exp(k(M-1)/m)-1 \approx k(M-1)/m$ for large $m$.  Hence for 
large $m$ (such as $k (M-1) \epsilon^{-2}$) using the algorithm presented here
has the total number of samples needed
for an $(\epsilon,\delta)$-approximation is (to leading order)
$2 k(M-1) \epsilon^{-2} \ln(4/\delta)$, with the 2 being the optimal value of
the constant.

\subsection{Linear extensions of a poset}

For a direct application of this process, consider the problem of counting
the number of linear extensions of a partially ordered set (poset).  A
{\em poset} on $n$ objects $\{1,\ldots,n\}$ is an ordering $\preceq$ with three
properties.  Let $i,j,k \in \{1,\ldots,n\}$. First, $i \preceq i$.  Second, if $i \preceq j$
and $j \preceq i$, then $i = j$.  Third, if $i \preceq j$ and $j \preceq k$,
then $i \preceq k$.  A {\em linear extension} of the poset is a permutation
$\tau$ such that $\tau(i) \preceq \tau(j) \Rightarrow i \leq j$.

Brightwell and Winkler~\cite{brightwellw1991} showed that counting the number of
linear extensions of an arbitrary poset is a \#P-complete problem.  Finding
the number of linear extensions has applications in nonparametric statistics~\cite{mortonpssw2009}.  

A sequence of results~\cite{karzanovk1991,matthews1991,bubleyd1999,huber2006b} culminated
in an $O(n^3\ln(n))$ method for generating samples uniformly from the set of linear 
extensions.  To convert this method into a method for approximately counting
the number of linear extensions, use self-reducibility.

Let $n_\ell$ be any element of $\{1,\ldots,n\}$ which is not preceded by another
element in the set.  Then an easy Markov chain argument gives that the probability
that a uniformly chosen linear extension has $\tau(n_\ell) = n$ is at least $1/n$.
Fixing $\tau(n_\ell) = n$ in the permutation leaves a linear extension problem
of size $n - 1$.  So the methods of this section can be applied with $k = n$ and 
$M = n$.  Hence (to first order) $2n^2 \epsilon^{-2}\ln(4/\delta)$ samples are 
needed to give an $(\epsilon,\delta)$-approximation to the number of linear extensions.

\subsection{Permanent of a $\{0,1\}$-matrix}
Let ${\cal S}_n$ be the set of permutations on $\{1,\ldots,n\}$.  Then 
the permanent of a matrix $A$ with entries
$a_{ij}$ is 
\[
\sum_{\tau \in {\cal S}_n} \prod_{i=1}^n 
 a_{i,\tau(i)}.
\]
Calculating the permanent exactly was shown by Valiant~\cite{valiant1979}
to be a \#P-complete problem.

Note that if $a_{ij} \in \{0,1\}$, then the only permutations $\tau$ that
contribute to the sum have $a_{i,\tau(i)} = 1$ for all $i$.  So the permanent
is the normalizing constant of the distribution over ${\cal S}_n$ with
unnormalized density $f(\tau) = \prod_{i=1}^n a_{i,\tau(i)}.$

Jerrum, Sinclair, and Vigoda~\cite{jerrumsv2004} developed a polynomial
time algorithm for approximately sampling from the density $f(\tau)$.  
As with the previous problem of linear extensions, for such a problem
on permutations there exists a value $i$ such that $\prob(\tau(n) = i) \geq 1/n$.
This can then be used with the basic self-reducibility process to get
an $(\epsilon,\delta)$-approximation for the permanent.  Without going into
details (as the method of~\cite{jerrumsv2004} for approximation was more complex 
than the basic approach) the result is the same as for linear extensions:
use of the methods of 
this paper immediately
reduces the constant in the leading term down to the optimal value.

\subsection{Gibbs distributions}  These distributions arise
in statistical physics and other applications.

\begin{definition}
  $\{\pi_\beta\}_{\beta \in \real}$ is a {\em Gibbs distribution with parameter $\beta
  $} over finite state space $\Omega$ if there exists a {\em Hamiltonian function} $H(x):\Omega \rightarrow \real$ such that for $X \sim \pi_\beta$,
  \[
  \prob(X = x) = \exp(-\beta H(x))/Z(\beta),
  \]
  where $Z(\beta) = \sum_{x \in \Omega} \exp(-\beta H(x))$ is 
  called the {\em partition function} of the distribution.
\end{definition}

A famous example of a Gibbs 
distribution is the Ising model~\cite{ising1925}, 
where the state space consists of labellings of the nodes
of a graph $G = (V,E)$ by either 0 or 1, and 
$H(x) = \sum_{\{v,w\} \in E} -(x(v)-x(w))^2.$  In~\cite{jerrums1993} finding the partition function of
the ferromagnetic Ising model (where $\beta > 0$) was shown to be a \#P-complete problem for general graphs,
but that same work showed how to generate (approximately) samples from the distribution in time polynomial in the 
size of the graph.

Typically it is easy to find $Z(0)$ for these problems.  For the Ising model, $Z(0) = 2^{\#V}$.  In~\cite{stefankovicvv2009} it was shown how to build
an estimate for $Z_\beta/Z(0)$ using samples from $\pi$
where the ratio $\sigma^2/\mu^2$ was bounded.  
In~\cite{huber2015a}, it was shown how to build
two random variables $W$ and $V$ such that 
$\mean[W]/\mean[V] = Z(\beta)/Z(0)$ and
each had relative variance bounded above by
$2e$.

Let $\epsilon' = [-1 + \sqrt{1+\epsilon^2}]/\epsilon \leq \epsilon/2-\epsilon^3(1.5-\sqrt{2})$
for $\epsilon \in [0,1]$.  If 
\[
|\hat \mu_W -\mean[W]| \leq \epsilon \mean[W]
\text{ and }
|\hat \mu_V -\mean[V]| \leq \epsilon \mean[V],
\]
then
\[
\frac{\mean[W]}{\mean[V]}\frac{1 - \epsilon'}{1+\epsilon'} \leq \frac{\hat \mu_W}{\hat \mu_V}
  \leq \frac{\mean[W]}{\mean[V]}\frac{1+\epsilon'}
  {1-\epsilon'}.
\]

Then it is straightforward to show that 
$\hat \mu = [\hat \mu_W/\hat \mu_V]\sqrt{1+\epsilon^2}$ satisfies
\[
\hat \mu \in [(\mean[W]/\mean[V])(1-\epsilon),(\mean[W]/\mean[V])(1+\epsilon)],
\]
thereby
giving an $(\epsilon,\delta)$-approximation that (to leading order) requires
$2 (4 \epsilon^{-2})(2e)^2\ln(4/\delta)$ samples to estimate the partition function value.

\bibliographystyle{plain}

\end{document}